\DeclareMathOperator{\Tr}{Tr}
\newtheorem{theorem}{Theorem}
\newtheorem{lemma}{Lemma}
\newtheorem{definition}{Definition}
\newtheorem*{theorem*}{Theorem}
\newtheorem*{lemma*}{Lemma}
\newtheorem*{proposition*}{Proposition}
\newcounter{protocol}
\def\>{\rangle}
\def\<{\langle}
\def\zo{$\{0,1\}$}
\begin{document}
	
	\title{Optimal and Feasible Contextuality-based Randomness Generation}
	
	\author{Yuan Liu}
	\email{yuan59@connect.hku.hk}
	\affiliation{School of Computing and Data Science, The University of Hong Kong, Pokfulam Road, Hong Kong}
	\author{Ravishankar Ramanathan}
	\email{ravi@cs.hku.hk}
	\affiliation{School of Computing and Data Science, The University of Hong Kong, Pokfulam Road, Hong Kong}
	
	\begin{abstract}
Semi-device-independent (SDI) randomness generation protocols based on Kochen-Specker contextuality offer the attractive features of compact devices, high rates, and ease of experimental implementation over fully device-independent (DI) protocols. Here, we investigate this paradigm and derive four results to improve the state-of-art. Firstly, we introduce a family of simple, experimentally feasible orthogonality graphs (measurement compatibility structures) for which the maximum violation of the corresponding non-contextuality inequalities allows to certify the maximum amount of $\log_2 d$ bits of randomness from a qu$d$it system with projective measurements for $d \geq 3$. We analytically derive the Lovász theta and fractional packing number for this graph family, and thereby prove their utility for optimal randomness generation in both randomness expansion and amplification tasks. Secondly, a central additional assumption in contextuality-based protocols over fully DI ones, is that the measurements are repeatable and satisfy an intended compatibility structure. We frame a relaxation of this condition in terms of $\epsilon$-orthogonality graphs for a parameter $\epsilon > 0$, and derive quantum correlations that allow to certify randomness for arbitrary relaxation $\epsilon \in [0,1)$. Thirdly, it is well known that a single qubit is non-contextual, i.e., the qubit correlations can be explained by a non-contextual hidden variable (NCHV) model. We show however that a single qubit is \textit{almost} contextual, in that there exist qubit correlations that cannot be explained by $\epsilon$-faithful NCHV models for small $\epsilon > 0$. Finally, we point out possible attacks by quantum and general consistent (non-signalling) adversaries for certain classes of contextuality tests over and above those considered in DI scenarios. 
	
		
	\end{abstract}
	\maketitle

	\textit{Introduction.--}	
Contextuality is a defining feature of quantum theory, distinguishing it from classical (non-contextual) hidden variable theories. The phenomenon of contextuality (specifically Kochen-Specker contextuality aka outcome contextuality) shows that outcomes cannot be assigned to quantum measurements (in Hilbert spaces of dimension $d \geq 3$) independently of the particular contexts in which the measurements are realized. The fact that the measurements of quantum observables may not be thought of as revealing pre-determined properties implies a fundamental or intrinsic randomness that can be exploited in quantum randomness expansion and amplification protocols~\cite{ekert1991quantum,herrero2017quantum}. Such protocols offer the promise of high-quality quantum-certified secure random bits that would be crucial for several cryptographic applications.  
	
While fully device-independent (DI) protocols based on tests of quantum nonlocality have been developed in recent years for the tasks of randomness generation, expansion and amplification~\cite{pironio2010random,pironio2013security,brown2019framework,liu2021device,colbeck2012free,gallego2013full,ramanathan2016randomness,brandao2016realistic,ramanathan2018practical,kessler2020device,zhao2022tilted,ramanathan2021no,ramanathan2023finite}, their experimental implementation is highly demanding, with the requirement of high-fidelity entanglement distributed between multiple devices and high detection efficiencies. In contrast, contextuality can be tested in a single quantum system, which significantly simplifies the experimental requirements. The price to pay however is the additional assumption that the measurements obey specified compatibility relations and that the same observable is measured in multiple contexts. 
As such, contextuality-based protocols are said to be semi-device-independent (SDI), note however that other than these basic assumptions, no other assumption is required on the state, measurements or dimensionality of the system. The SDI framework thus offers a practical compromise between security and feasibility, allowing for compact, experimentally feasible high-rate protocols. 

\color{black}
While multiple investigations of contextuality-based SDI protocols have been carried out in recent years \cite{abbott2012strong,abbott2014value,abbott2015variant,um2020randomness,um2013experimental,singh2017quantum,singh2024local,ramanathan2020gadget,liu2023optimal}, several fundamental open questions remain. 
One of the most crucial aspects of generation protocols is the randomness generation rate. It is well known that the maximum theoretical amount of randomness that can be certified from a $d$-dimensional quantum system using projective measurements is $\log_2 d$ bits. However, it was previously unknown how to achieve this fundamental limit for arbitrary $d \geq 3$~\cite{acin2016optimal,borkala2022device,sarkar2023self}. Only a recent work by M.~Farkas et al.~\cite{farkas2024maximal} achieved the maximum randomness for any $d \geq 2$ using POVMs in Bell scenarios, but this necessitates consuming entangled qudits.
In this Letter, we introduce a family of measurement structures (represented as orthogonality graphs) and a family of non-contextuality inequalities parametrized by dimension $d \geq 3$ to achieve this goal. We derive the maximal values of these inequalities in non-contextual hidden variable (NCHV) theories, quantum theory, and general consistent (or non-disturbing \cite{RSKD12}) theories. We prove that observing the maximum quantum violation certifies $\log_2 d$ bits of randomness, even when the measurements are chosen using arbitrarily weak Santha-Vazirani-type source of randomness \cite{colbeck2012free, gallego2013full, ramanathan2016randomness}.

A major practical challenge in contextuality-based protocols is that one cannot precisely distinguish between a projector and an arbitrarily close POVM element in experiments~\cite{kunjwal2018statistical,xu2016reformulating}. To address this, Winter~\cite{winter2014does} introduced the concept of $\epsilon$-ontologically faithful NCHV ($\epsilon$-ONC) models, stating that experiments can only refute NCHV models that are $\epsilon$-faithful, meaning that POVM elements that are $\epsilon$-close can only be assigned different values in the NCHV model with probability $\epsilon$. A natural follow-up question arises: To what extent can quantum theory refute such $\epsilon$-ONC models? In other words, {\em can quantum theory refute such models for arbitrary $\epsilon$?} Furthermore, a single qubit is known to be non-contextual, as its correlations can always be explained by NCHV models~\cite{kochen1990problem,bell1966problem,mermin1993hidden}. The fact that only $\epsilon$-faithful NCHV models are meaningful and can be refuted in realistic experiments  raises the question: {\em Could this inevitable $\epsilon$-faithfulness make a single qubit system ``almost'' contextual, and thus useful for certain contextuality-based tasks?} In this Letter, we answer both questions in the affirmative. We revisit the notion of $\epsilon$-ONC models, derive a tight bound on the value achieved by $\epsilon$-ONC models for general non-contextuality inequalities, and use it to show that quantum theory refutes such models for arbitrary $\epsilon \in [0,1)$. Additionally, we prove that a single qubit is almost contextual - there exist qubit correlations that cannot be explained by $\epsilon$-ONC models for small $\epsilon > 0$. Finally, to highlight the significance of the contextuality-based SDI protocol proposed in this Letter, we investigate the security of these protocols by deriving a class of attacks that can be carried out by adversaries holding quantum or more general consistent resources~\cite{bierhorst2018experimentally}. These attacks extend those examined in the fully DI setting, providing a comprehensive analysis of potential vulnerabilities in contextuality-based SDI protocols.

\color{black}

\textit{Certifying $\log_2 d$ bits of randomness from qudit systems for $d \geq 3$.-- }
	In a randomness generation protocol, the optimal amount of randomness that can be certified from a qu$d$it system using projective measurements is $\log_2 d$. We introduce a family of non-contextuality inequalities for each dimension $d \geq 3$ to achieve this optimal value. More precisely, we show that when the maximum quantum values of these non-contextuality expressions are achieved, one of the measurement setttings necessarily yields $d$ outcomes each occurring with uniform probability $1/d$, with respect to any quantum adversary (holding general quantum side information). 
	
The corresponding measurement structures are illustrated in Fig.~\ref{ORgraph} as orthogonality graphs. Specifically, the measurement structure in dimension $d$ is defined by a finite set consisting of $5d$ rank-one projectors $S_d=\{|v_{i,j}\>\<v_{i,j}|\}_{i=1,\ldots,d;j=1,\ldots,5}$. The projectors exhibit the following orthogonality structure: (1) projectors in $\{|v_{i,1}\>\<v_{i,1}|\}_{i=1}^d$ are mutually orthogonal, and (2) for each $i \in \{1,\ldots,d\}$, the projector $|v_{i,j}\>\<v_{i,j}|$ is orthogonal to $|v_{i,j+1}\>\<v_{i,j+1}|$ with $j+1$ taken modulo $5$. The corresponding orthogonality graph $\mathcal{G}_d=(\mathcal{V}_d,\mathcal{E}_d)$ is drawn with each vertex in  $\mathcal{V}_d$ representing a projector in $S_d$ and two vertices connected by an edge if and only if the corresponding projectors are orthogonal. From the orthogonality relations among the projectors in $S_d$, we see that the orthogonality graph $\mathcal{G}_d$ contains a central maximum clique of size $d$ along with $d$ cycles of length $5$ ($C_5$), with every vertex in the maximum clique belonging to a single $C_5$.

\begin{figure}[t]
		\centering
		\subfigure[]{
			\begin{minipage}[t]{0.23\textwidth}
				\centering
				\includegraphics[width=1\textwidth]{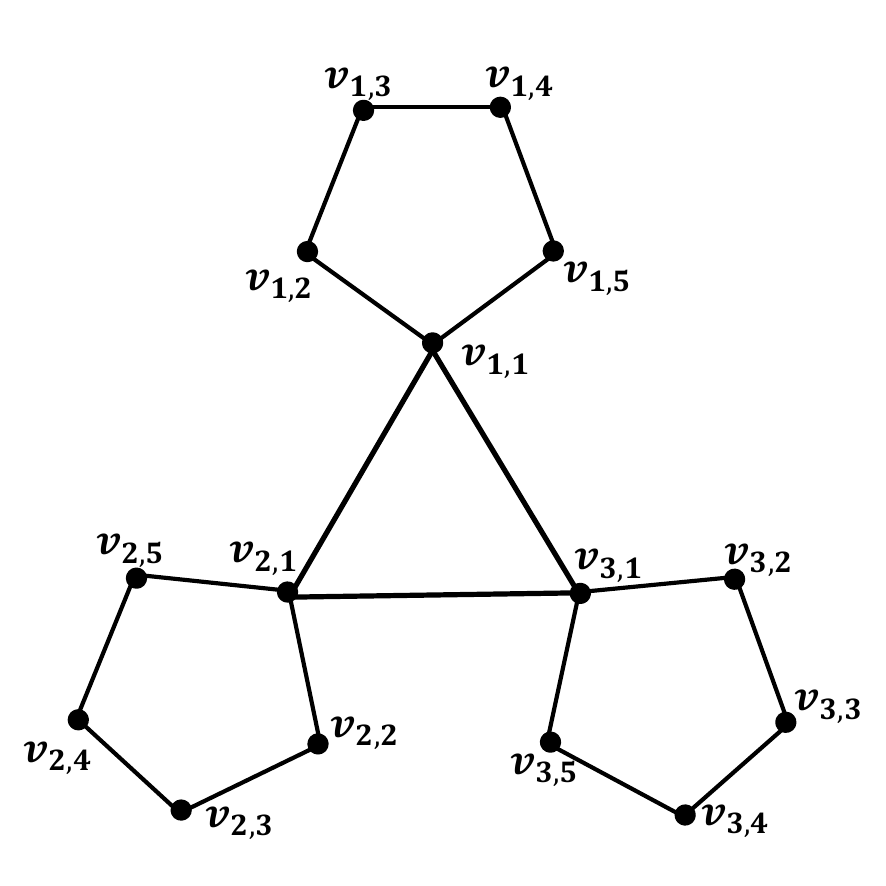}
			\end{minipage}
		}
		\subfigure[]{
			\begin{minipage}[t]{0.23\textwidth}
				\centering
				\includegraphics[width=1\textwidth]{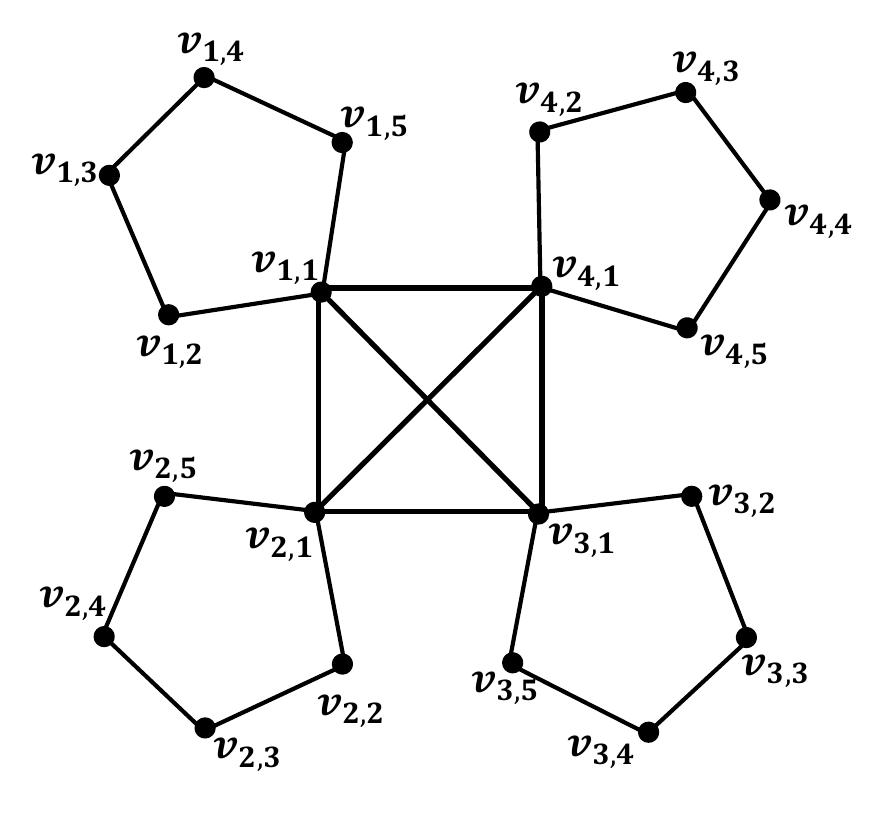}
			\end{minipage}
		}
		\caption{The orthogonality graphs $\mathcal{G}_3$, $\mathcal{G}_4$ for rank-one projector sets $S_3$ and $S_4$ described in the main text.}
		\label{ORgraph}
\end{figure}
	
	From the set of projectors, one can also construct a set of binary observables defined by $A_{i,j}:= \mathbb{I}-2|v_{i,j}\>\<v_{i,j}|$, with eigenvalues $\{+1, -1\}$, whose compatibility structure follows directly from the orthogonality relations of the projectors. Compatible observables can be jointly measured, and their measurement outcomes in quantum theory are unaffected by the order in which they are measured.
We now present the following non-contextuality expression $\mathcal{I}_d$ for any $d \geq 3$, ensuring that each term (referring to one context) in $\mathcal{I}_d$ consists of compatible observables:
	\begin{equation}
		\mathcal{I}_d := 2\prod_{i=1}^d A_{i,1} + \sum_{i=1}^{d} \sum_{j=1}^{5} A_{i,j} A_{i,j+1}.
	\end{equation}
    
The NCHV bound and quantum bound of the above non-contextuality expression are directly related to the weighted independence number and Lovász theta number of the orthogonality graph~\cite{cabello2014graph}. For a brief explanation of these graph parameters in relation to quantum contextuality, please refer to App.~A in the Supplemental Material~\cite{SM}. To be precise, we rewrite the expression $\mathcal{I}_d$ using the orthogonality relations of the projectors in $S_d$:
	\begin{equation}
		\begin{split}
			\mathcal{I}_d &= 2\mathbb{I} - 4\sum_{i=1}^d |v_{i,1}\>\<v_{i,1}| + \sum_{i=1}^d \left(5\mathbb{I} - 4\sum_{j=1}^5 |v_{i,j}\>\<v_{i,j}|\right) \\
			&= (5d + 2)\mathbb{I} - 4\sum_{i=1}^d \left(2 |v_{i,1}\>\<v_{i,1}| + \sum_{j=2}^5 |v_{i,j}\>\<v_{i,j}|\right).
		\end{split}
	\end{equation}
	The NCHV lower bound for the expectation value of $\mathcal{I}_d$ is calculated based on all possible \zo-valued  non-contextual assignments for projectors in $S_d$, which is equivalent to the weighted independence number $\alpha$ of the orthogonality graph $\mathcal{G}_d$, with weight $2$ assigned to vertices $v_{i,1} \ \forall i$ and weight $1$ to the other vertices. Denote this assignment of weights to the vertices of the graph by $\bm{w}$. Quantum theory allows to achieve a value lower than the NCHV bound, characterized by the weighted Lovász theta number $\theta$ of the orthogonality graph $\mathcal{G}_d$ (given by a semidefinite program). In general consistent (no-disturbance) theories, the value of the non-contextuality expression is characterized in terms of the weighted fractional packing number $\alpha^*(\mathcal{G}_d,\bm{w})$ (given by a linear program).
	\begin{equation}
		\<\mathcal{I}_d\> \overset{\text{NCHV}}{\geq} (5d + 2) - 4\alpha(\mathcal{G}_d,\bm{w}) \overset{\text{Q}}{\geq} (5d + 2) - 4\theta(\mathcal{G}_d,\bm{w}).
	\end{equation}
			\begin{table}[H]
			\centering
			\footnotesize
			\begin{tabular}{|c||c|c|c|c|c|c|c|}
				\hline
				$d$ & 3 & 4 & 5 & 6 & 7 & 8 & $d\geq 9$ \\
				\hline
				\hline
				$\alpha^*(\mathcal{G}_d,\bm{w})$ & 8 & 10 & 12 & 14 & 16 & 18 & $2d+2$ \\
				\hline
				$\theta(\mathcal{G}_d,\bm{w})$ & 7.6753 & 9.8030 & 11.8869 & 13.9419 &  15.9762& 17.9944 & $2d+2$ \\
				\hline
				$\alpha(\mathcal{G}_d,\bm{w})$ & 7 & 9 & 11 & 13 & 15 & 17 & $2d+1$ \\
				\hline
			\end{tabular}
			\caption{The weighted fractional packing number, weighted independence number and weighted Lovász theta number of the orthogonality graphs $(\mathcal{G}_d,\bm{w})$.}
			\label{table1}
		\end{table}
        The NCHV, quantum, and no-disturbance lower bounds of the non-contextual expression $\mathcal{I}_d$ are given by the values in Table~\ref{table1}. Furthermore, we demonstrate that any quantum correlation achieving $\theta(\mathcal{G}_d,\bm{w})$ necessarily results in $d$ outcomes with uniform probabilities of $1/d$ for the central clique, for any $d \geq 3$. In particular, we establish the following (proof in App.~B of~\cite{SM}).

		\begin{theorem}\label{them_QuantumValue}
		 The weighted independence number of the graph $(\mathcal{G}_d,\bm{w})$ is $\alpha(\mathcal{G}_d,\bm{w}) = 2d + 1$ for any $d \geq 3$, where vertices $v_{i,1} \ \forall i$ are assigned weight $2$ and all other vertices are assigned weight $1$. The weighted Lovász theta number for these weights is $\theta(\mathcal{G}_d,\bm{w}) = 2d+2$ for $d \geq 9$, with the value for $3 \leq d \leq 8$ being given as in the Table~\ref{table1}. The weighted fractional packing number is $\alpha^*(\mathcal{G}_d,\bm{w}) = 2d+2$ for $d \geq 3$. Furthermore, for any quantum correlation achieving the weighted Lovász theta number, there exists a measurement setting (a context) $c^*$ that produces $d$ outcomes, each occurring with a uniform probability of $1/d$.
	\end{theorem}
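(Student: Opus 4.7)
The proof breaks into four parts matching the four claims: the combinatorial independence number, the linear-programming fractional packing value, the semidefinite Lovász theta value, and the structural property of any quantum correlation saturating $\theta$. My plan is to handle them in this order, since the three graph invariants satisfy $\alpha \leq \theta \leq \alpha^*$ and the certificates constructed for one constrain the next, and then to harvest the structural uniqueness from complementary slackness in the $\theta$-SDP.

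For $\alpha(\mathcal{G}_d,\bm{w})=2d+1$, I would first produce an explicit weight-$(2d+1)$ independent set inside $\mathcal{G}_d$, choosing as many of the heavy vertices $v_{i,1}$ as the orthogonality relations of the construction permit and topping it up with weight-$1$ vertices, and then match it with a clique cover of total weight $2d+1$ so that LP duality pins the invariant. For $\alpha^*(\mathcal{G}_d,\bm{w})=2d+2$ I would write the primal fractional packing LP and its dual fractional clique cover; a uniform assignment of weight $1/d$ along the $d$-cliques of $\mathcal{G}_d$ together with the corresponding dual should evaluate to $2d+2$ on both sides. The one-unit gap between $\alpha$ and $\alpha^*$ is exactly the contextuality gap.

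For the Lovász theta value, the upper bound $\theta\leq 2d+2$ for all $d\geq 3$ follows directly from $\theta\leq \alpha^*$. The lower bound for $d\geq 9$ is the technically heaviest step: I would exhibit an explicit orthonormal representation of $\mathcal{G}_d$ (typically inherited from the original KS vectors themselves) together with a handle vector $|h\rangle$ so that $\sum_i w_i\,|\langle h|v_i\rangle|^2 = 2d+2$, verifying feasibility by checking orthogonality along every edge and computing the weighted sum. For the small cases $3\leq d\leq 8$ the value $2d+2$ is not attained; here I would certify the entries of Table~\ref{table1} by presenting primal/dual pairs of the SDP whose common value matches the tabulated number, obtained via symmetry-reduced numerical computation that can then be rationalised into closed-form algebraic witnesses.

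The final claim, that any quantum correlation saturating $\theta$ contains a $d$-outcome context with uniform distribution, will be obtained from complementary slackness on the Lovász SDP. Any primal optimum lives on the face dual to the optimal dual, so the probabilities $p(v|C_v)=\langle\psi|P_v|\psi\rangle$ are forced to vanish off the dual support and to sum to $1$ on every maximum clique. I expect the main obstacle to be identifying the structure of the optimal dual, specifically showing that it concentrates its mass uniformly on the $d$ vertices of a single maximum clique $c^*$; this I would attack by analysing the KKT conditions under the automorphism group of $\mathcal{G}_d$ and using the fact that the weights $\bm{w}$ single out the $v_{i,1}$ class. Once this structural lemma is in hand, the conclusion $p(v|c^*)=1/d$ follows immediately from the exclusivity and normalisation of the $d$ outcomes in that context.
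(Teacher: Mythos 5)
There is a genuine gap in your first step. You propose to certify the upper bound $\alpha(\mathcal{G}_d,\bm{w})\leq 2d+1$ by exhibiting ``a clique cover of total weight $2d+1$'' and invoking LP duality. No such cover exists: the minimum weighted fractional clique cover is the LP dual of the fractional packing program, so its value equals $\alpha^*(\mathcal{G}_d,\bm{w})=2d+2$, and every (fractional or integral) clique cover therefore has weight at least $2d+2$. The one-unit gap you correctly identify as ``the contextuality gap'' is precisely the reason LP duality cannot pin $\alpha$ here. The paper instead uses the combinatorial structure of $\mathcal{G}_d$: it is a union of $d$ copies of $C_5$ whose distinguished vertices $v_{i,1}$ form a central $d$-clique, so each $C_5$ (an odd cycle) contributes at most $2$ vertices to any independent set, and at most one of the weight-$2$ vertices can be selected, giving $3+2(d-1)=2d+1$. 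Some such odd-cycle argument is unavoidable; you need to replace the duality step with it.

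The remaining parts are a plausible but largely unexecuted alternative route. For $d\geq 9$ your sandwich $\theta\leq\alpha^*=2d+2$ combined with an explicit orthonormal representation attaining $2d+2$ would work, but the explicit representation \emph{is} the hard content: the paper obtains it by symmetrizing the Lov\'asz SDP over the automorphisms permuting the $C_5$ blocks, reducing to the theta value of a single $C_5$ constrained so that the apex vertex contributes $1/t$, parameterizing its three-dimensional representation by angles, and showing via the envelope theorem that the value is decreasing in $t$, which forces $t=d$ and yields $2+2/d$ per block (the optimum $x^*=\tfrac34\pm\tfrac14\sqrt{(d-9)/(d-1)}$ is real only for $d\geq 9$, explaining why the small cases fall below $2d+2$ and must be tabulated). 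Note that this same monotonicity-in-$t$ argument is what delivers the final claim $|\langle v_{i,1}|\psi\rangle|^2=1/d$ for the central context $c^*$; your complementary-slackness plan aims at the same conclusion and is in principle better suited to the ``for any optimal correlation'' phrasing, but you defer exactly the step (identifying the optimal dual's support) where the work lies.
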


\textit{Use in randomness expansion and amplification.-- }
\color{black}
The aforementioned contextuality tests that certify $\log_2 d$ bits of randomness can be utilized in contextuality-based SDI randomness expansion protocols, which aim to transform a short random seed into a longer string of uniformly random bits. In App.~F of~\cite{SM}, we provide a detailed description of how to adapt the standard randomness expansion protocol~\cite{brown2019framework} into a contextuality-based one. Typically, the protocol operates over multiple rounds, designating each round as either a {\em test round} or a {\em randomness generation round}. In test rounds, an initial random seed is used to select the input for the contextuality test, and the data collected from these rounds are used to calculate the {\em game score} $\omega_d := \frac{(5d+2) - \<\mathcal{I}_d\>}{4}\in [\alpha(\mathcal{G}_d,\bm{w}),\theta(\mathcal{G}_d,\bm{w})]$. In contrast, randomness generation rounds use a specific context $c^*$ as the input and accumulate the outcomes as raw randomness. The amount of randomness can be formally quantified by the min-entropy (denoted as $H_{\text{min}}(a | X = c^*, E)$) of the guessing probability of a potential quantum adversary, which represents the highest probability that an adversary can correctly guess the outcomes from $c^*$. Importantly, in the SDI (and also the DI) paradigm, the quality of the generated randomness is assessed based on the game score calculated from the test rounds. For the case of $d = 3$, we numerically calculate the randomness quantified by $H_{\text{min}}(a | X = c^*, E)$ as a function of the game score $\omega_3$ and plot it in Fig.~\ref{min_entropy}. As shown in the figure, the observation of the maximum value $\theta(\mathcal{G}_3,\bm{w})$ certifies $\log_2 3$ bits of randomness.

The Entropy Accumulation Theorem (EAT)~\cite{arnon2018practical, dupuis2020entropy} provides a lower bound on the total output randomness for the protocol through i.i.d. analyses. We review the definition of EAT channels in App.~F, adapting them to the contextuality scenario. To successfully apply the EAT theorem, it is essential to identify the so-called min-tradeoff functions $f_{\text{min}}$, which map the game score to $\mathbb{R}$ and provide lower bounds for the worst-case von Neumann entropy. Following the approach outlined in~\cite{brown2019framework}, we derived a family of convex, differentiable functions $\{f_{\hat{\omega}}(\omega_3)\}_{\hat{\omega}}$ as candidates for the min-tradeoff function for the protocol, which are plotted in Fig.~\ref{min_entropy} (details are deferred to App.~F). Each of these functions is a valid min-tradeoff function for the protocol and can be applied as $f_{\text{min}}$ to lower bound the total randomness; however, as shown in Fig.~\ref{min_entropy}, they exhibit different performance for the exact game score obtained from the test rounds.

    \color{black}
	
	\begin{figure}[t]
		\centering
		\includegraphics[width=0.7\linewidth]{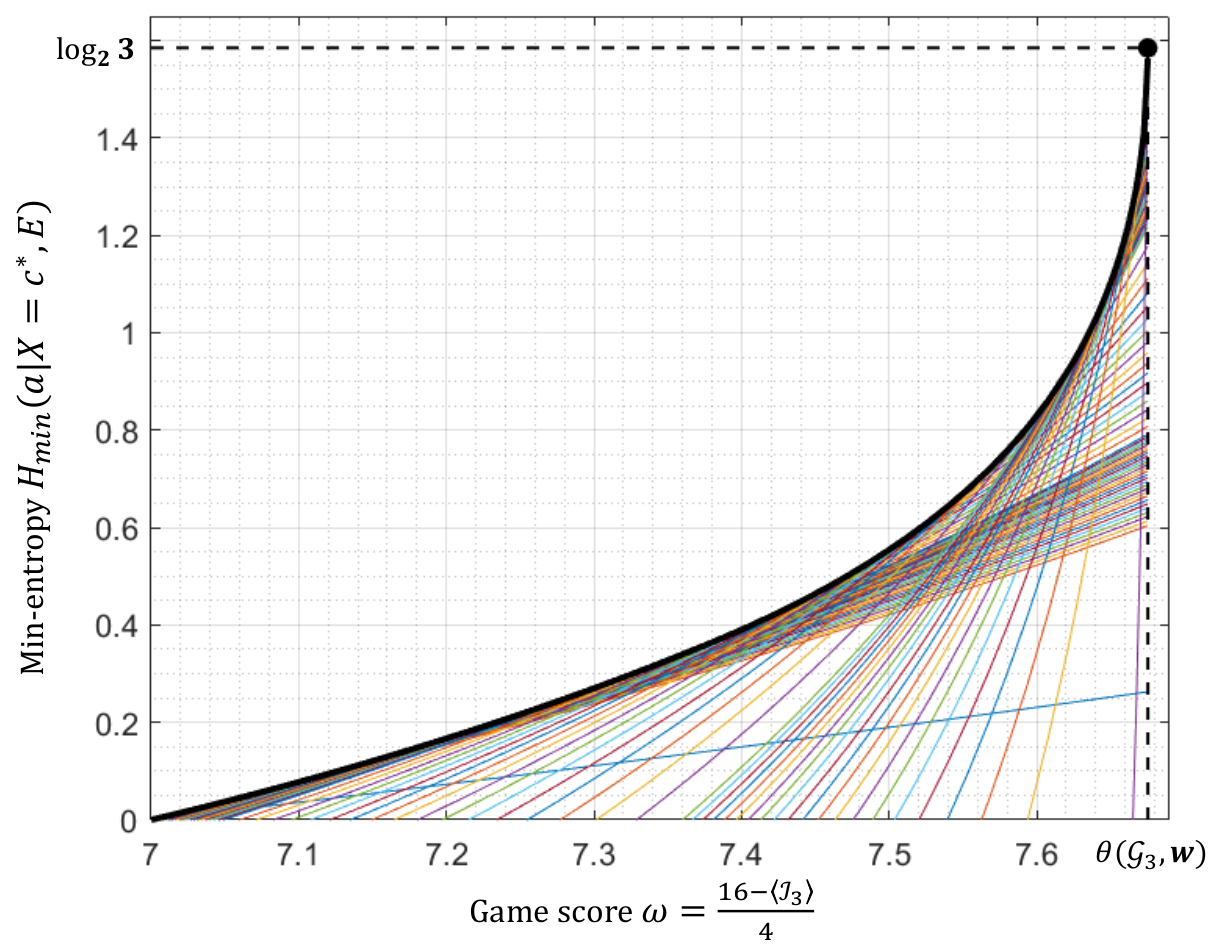}
		\caption{The min-entropy $H_{\text{min}}(a | X = c^*, E)$ versus the game score $\omega_d$ for $d=3$, along with a family of candidate min-tradeoff functions for the protocol.}
		\label{min_entropy}
	\end{figure}

Furthermore, we note that the derived non-contextuality inequalities serve as ideal candidates for randomness amplification protocols \cite{colbeck2012free,gallego2013full,ramanathan2016randomness,brandao2016realistic,ramanathan2018practical,kessler2020device,zhao2022tilted,ramanathan2021no,ramanathan2023finite}. In this class of protocols, one starts with a weak seed, such as a Santha-Vazirani source, where each bit has a small amount of randomness defined by a bias $0 \leq \kappa < 1/2$, conditioned on any adversarial side information. Specifically, we have $1/2 - \kappa \leq \text{Pr}[X_j = 0 | X_0, X_1, \ldots, X_{j-1}] \leq 1/2 + \kappa$ for all $j$, where $X_j$ denote the bits from the source. The task of randomness amplification is to extract private, fully random bits from such a weak random source. It is well-known that to achieve randomness amplification from sources with arbitrary $\kappa < 1/2$, one requires quantum correlations that achieve (or come arbitrarily close to) the general consistent (non-signalling) bound for the test~\cite{gallego2013full,liu2024equivalence}. The fact that the derived inequalities certify the maximum amount of $\log_2 d$ bits for $d \geq 9$, under the condition that $\alpha(\mathcal{G}_d,\bm{w}) < \theta(\mathcal{G}_d,\bm{w}) = \alpha^*(\mathcal{G}_d,\bm{w})$ (see Table \ref{table1}) makes them ideal candidates for this task.

\color{black}

In this regard, it is important to note that not all contextuality tests are useful for randomness expansion and amplification. Specifically, we propose two types of attacks in App.~G of~\cite{SM}, which highlight the significance of our protocol. Firstly, we point out that protocols that rely solely on the observed value of a \textit{state-independent} contextuality inequality are insecure against a quantum adversary who shares entanglement with the device implementing the protocol. Secondly, we show that protocols based on a specific class of contextuality tests referred to as {\em magic arrangements}~\cite{arkhipov2012extending}, which include the famous Peres-Mermin magic square and the Mermin magic star, are insecure against hybrid classical adversaries~\cite{bierhorst2018experimentally}. These adversaries do not share any quantum or post-quantum resources with the device but are allowed to prepare general consistent correlations in advance for the device.
    
\color{black}	
\textit{Relaxed $\epsilon$-Non-Contextuality for Experimental Feasibility.-- } 
The central additional assumption in our SDI protocols, as opposed to fully DI protocols, is non-contextuality, which indicates that the ``same'' projector is measured in multiple different contexts. That is, an outcome $i$ from one measurement is identified with an outcome $j$ from another measurement, even if they arise from completely different experiments. In the NCHV model, the random variables $f_i, \; f_j \in \{0,1\}$ associated with these outcomes are identical. Similarly, in the quantum model, the associated projectors (or more generally, POVM elements) are identical. Nevertheless, in practical contextuality experiments (and also the resulting SDI protocols), this assumption is not feasible since experiments cannot perfectly distinguish between a projector $P_i$ and an arbitrarily close POVM element $Q_i$. 
The general NCHV model can assign different values in $\{0,1\}$ to these close POVM elements, leading to a ``nullification'' of the Bell-Kochen-Specker theorem. Meyer, Clifton, and Kent \cite{meyer1999finite,kent1999noncontextual,clifton2000simulating} provided such a nullification by showing that for each $d \geq 3$, there exists a dense set of complete projective measurements made up of rank-one projectors, such that each projector appears in only one measurement.
Consequently, in practical experimental situations and SDI protocols, we do not certify quantum contextuality against arbitrary NCHV models. Instead, we must consider $\epsilon$-ontologically faithful non-contextual models for a parameter $\epsilon \in [0,1)$, a concept formally introduced by Winter in \cite{winter2014does}. Here, we adapt the definition for orthogonality graphs, establish a tighter upper bound for non-contextuality inequalities than previously noted in \cite{winter2014does}, and demonstrate that quantum contextuality can be certified for arbitrary $\epsilon$.

\color{black}
\begin{definition}
An $\epsilon$-ontologically faithful non-contextual ($\epsilon$-ONC) model for an orthogonality graph $G = (V, E)$, where we denote the set of maximal cliques of $G$ as $\Gamma$, consists of measurable functions $f_v^{C} : \Lambda \to \{0,1\}$ for vertices $v \in V$ and contexts $C \in \Gamma$ such that
\begin{eqnarray}\label{onc_condition}
&&\sum_{v \in C} f_v^{C} \leq 1, \quad \forall C \in \Gamma, \nonumber \\	
&&\text{Pr}\left[f_v^{C} \neq f_{v}^{C'}  \right] \leq \epsilon, \quad \forall C, C' \in \Gamma, \; \forall v \in C \cap C'.
\end{eqnarray}		 
Here, $\Lambda$ is a joint probability space with probability measure $\text{Pr}$ from which the functions $f_v^{C}$ are derived.
\end{definition}

In this relaxed NCHV model, we treat the observable corresponding to vertex $v$ in contexts $C$ and $C'$ as ``almost'' identical; specifically, the probability that the random variables $f_v^{C}$ and $f_{v}^{C'}$ differ is at most $\epsilon$. To incorporate this relaxed model with a graph-theoretic representation, we introduce the concept of $\epsilon$-orthogonality, denoted by $\epsilon$-edges in an $\epsilon$-orthogonality graph. In contrast to edges in an orthogonality graph, which indicate strict orthogonality, $\epsilon$-edges represent ``almost'' orthogonality.

	\begin{definition}
		An $\epsilon$-orthogonality graph $G_{\epsilon}$ for a set $V$ is defined as a triple $G_{\epsilon} = (V_{\epsilon}, E_{\epsilon}, \widetilde{E}_{\epsilon})$, where $V_{\epsilon}$ is the set of vertices, $E_{\epsilon}$ is the edge set representing strict orthogonality, and $\widetilde{E}_{\epsilon}$ is the $\epsilon$-edge set representing $\epsilon$-orthogonality.
	\end{definition}

We now apply the $\epsilon$-ONC model to a non-contextuality inequality and calculate its maximum value in the relaxed model using the graph-theoretic approach. For a given non-contextuality inequality and its corresponding orthogonality graph $G = (V,E)$ with vertex weights $\bm{w}$, we construct an $\epsilon$-orthogonality graph $G_{\epsilon} = (V_{\epsilon}, E_{\epsilon}, \widetilde{E}_{\epsilon})$ with vertex weights $\bm{w}_{\epsilon}$. Denote the set of maximal cliques of $G$ as $\Gamma$, and let $n_v$ be the number of maximal cliques in $\Gamma$ containing $v\in V$. We define $V_{\epsilon} := \{v^{(C)} \mid v \in V, C \in \Gamma\}$. For each $(v,u)\in E$ and $C,C'\in \Gamma$ with $v\in C$ and $u\in C'$, $G_{\epsilon}$ has an edge between $v^{(C)}$ and $u^{(C')}$: $(v^{(C)},u^{(C')})\in E_{\epsilon}$ when $C=C'$, and $(v^{(C)},u^{(C')})\in \widetilde{E}_{\epsilon}$ when $C\neq C'$. The vertex weights $\bm{w}_{\epsilon}$ for $v^{(C)}$ are defined as $w_{\epsilon}(v^{(C)}) := w(v)/n_v$, where $w(v)$ is the weight of $v$ in $\bm{w}$.

\color{black}
In a graph two vertices connected by an edge have zero ``independence'', they cannot belong to the same independent set. On the other hand, non-adjacent vertices have complete ``independence''. $\epsilon$-edges offer an intermediate level of independence between two vertices. This translates into the idea that the probability that two purportedly orthogonal projectors are both assigned value $1$ is at most $\epsilon$. 

\color{black}

To calculate the maximum value of the given non-contextuality inequality in the $\epsilon$-ONC model, denote as $\alpha_{\epsilon}(G_{\epsilon}, \bm{w}_{\epsilon})$, we define two related simple graphs: $G' := (V_{\epsilon}, E_{\epsilon})$, which includes only the edges of $G_{\epsilon}$ (and is thus just a bunch of disjoint maximal cliques), and $G'' := (V_{\epsilon}, E_{\epsilon} \cup \widetilde{E}_{\epsilon})$, which treats the $\epsilon$-edges as ``standard'' edges. We prove in App.~C in~\cite{SM} that:
	\begin{equation}\label{ep_alpha_1}
		\alpha_{\epsilon}(G_{\epsilon},\bm{w}_{\epsilon}) \leq \epsilon \cdot \alpha(G',\bm{w}_{\epsilon}) + (1 - \epsilon) \cdot \alpha(G'',\bm{w}_{\epsilon}).
	\end{equation}
    
Having established the value of non-contextuality inequalities under the experimentally testable $\epsilon$-ONC models, we now show that there exist quantum correlations that exceed this value for certain inequalities for arbitrary value of $\epsilon \in [0,1)$. 
	\begin{theorem}\label{them_relax}
        For an arbitrary $\epsilon \in [0, 1)$, there exists a non-contextuality inequality and a corresponding optimal quantum correlation that certifies contextuality against $\epsilon$-ONC models.
	\end{theorem}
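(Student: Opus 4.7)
My plan is to translate the theorem into a sharp inequality between three graph parameters and then exhibit an explicit family of orthogonality graphs that satisfies it for every $\epsilon<1$. Combining Lemma~\ref{lemma_independent_relax} with the bound in Eq.~\eqref{ep_alpha_1}, a quantum correlation attaining the Lov\'asz value $\theta(G,\bm{w})$ certifies contextuality against the $\epsilon$-ONC model whenever
\begin{equation*}
\theta(G,\bm{w}) - \alpha(G,\bm{w}) \;>\; \epsilon\bigl(\alpha(G',\bm{w}_\epsilon) - \alpha(G,\bm{w})\bigr).
\end{equation*}
It therefore suffices to find a sequence $(G_n,\bm{w}_n)$ of weighted orthogonality graphs (each realised by an optimal orthogonal representation and a quantum state) for which the ratio $r(G_n):=(\theta(G_n,\bm{w}_n)-\alpha(G_n,\bm{w}_n))/(\alpha(G'_n,(\bm{w}_n)_\epsilon)-\alpha(G_n,\bm{w}_n))$ tends to $1$; once $r(G_n)>\epsilon$, the associated inequality is violated by an optimal quantum correlation. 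Note that the family $\mathcal{G}_d$ of Theorem~1 is \emph{not} suitable for this purpose, since a direct computation shows $\alpha(G',\bm{w}_\epsilon)-\alpha(\mathcal{G}_d,\bm{w})$ grows linearly in $d$ while $\theta-\alpha = 1$, so $r(\mathcal{G}_d)\to 0$.

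The family I would use instead is the odd cycles $C_{2k+1}$ with unit weights, whose maximal cliques are the $2k+1$ edges. Each vertex lies in exactly two of them, so $w_\epsilon(v^{(C)}) = 1/2$ and $\alpha(G',\bm{w}_\epsilon) = (2k+1)/2$. Since $\alpha(C_{2k+1}) = k$, the context-relaxation overshoot $\alpha(G',\bm{w}_\epsilon)-\alpha$ is the absolute constant $1/2$, independent of $k$. On the quantum side the weighted Lov\'asz theta number is the classical Lov\'asz value
\begin{equation*}
\theta(C_{2k+1}) = \frac{(2k+1)\cos\bigl(\pi/(2k+1)\bigr)}{1+\cos\bigl(\pi/(2k+1)\bigr)},
\end{equation*}
attained by the explicit three-dimensional orthogonal representation used in the generalised KCBS construction, together with a handle state along its symmetry axis. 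A direct Taylor expansion gives $\theta(C_{2k+1})-k = 1/2 - \pi^2/\bigl(8(2k+1)\bigr) + O(k^{-3})$, so choosing any $k$ with $2k+1 > \pi^2/\bigl(4(1-\epsilon)\bigr)$ forces $\theta(C_{2k+1}) > k + \epsilon/2$, which is exactly the condition $r(C_{2k+1})>\epsilon$.

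The main obstacle is quantitative: I must pin down the small gap $\theta(C_{2k+1})-k = 1/2 - O(1/k)$ precisely and exhibit the explicit orthogonal representation and state that attain it, because the whole argument hinges on this asymptotic approach to $1/2$ from below. These computations are classical (a short trigonometric SDP per cycle, essentially the original KCBS analysis parametrised by $k$), but they must be carried out carefully enough to guarantee the strict inequality $\theta(C_{2k+1}) > k + \epsilon/2$ for the chosen $k$. Apart from this explicit computation, the theorem follows at once from the linear-programming upper bound on $\alpha_\epsilon(G_\epsilon,\bm{w}_\epsilon)$ already established above.
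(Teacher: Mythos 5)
Your proposal is correct and follows essentially the same route as the paper: both use the odd cycles $C_n$ with unit weights, compute $\alpha(G',\bm{w}_\epsilon)=n/2$ and $\alpha(G'',\bm{w}_\epsilon)=\alpha(C_n)=(n-1)/2$ via the $\epsilon$-graph decomposition, compare $\alpha_\epsilon=(n-1+\epsilon)/2$ against the Lov\'asz value $\theta(C_n)=n\cos(\pi/n)/(1+\cos(\pi/n))$, and conclude that $n\gtrsim \pi^2/(4(1-\epsilon))$ suffices. The quantitative gap you flag ($\theta(C_n)-\tfrac{n-1}{2}=\tfrac{1}{2}-\tfrac{n}{2}\tan^2(\pi/(2n))$) is exactly the computation the paper carries out, so no further work beyond the paper's own argument is needed.
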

    \color{black}
	We prove this theorem (details are in App.~C of~\cite{SM}) by showing that the threshold $\epsilon$ for orthogonality graphs correspond to odd cycles $C_n$ is given by $\epsilon \leq 1 - \frac{\pi^2}{4n}$.
	Therefore, for any $\epsilon \in [0, 1)$, the optimal quantum correlations for the odd cycle $C_n$ with odd $n \geq \left\lceil \frac{\pi^2}{4(1 - \epsilon)} \right\rceil$, serve to certify contextuality.

    \textit{$\epsilon$-Contextuality for a single qubit system.-- }	
    It is well known that a single qubit is non-contextual, as the KS theorem applies only to Hilbert spaces of dimension $\geq 3$. 
    Indeed, explicit NCHV models have been derived to explain the correlations for a single qubit~\cite{kochen1990problem,bell1966problem,mermin1993hidden}. However, these models do not satisfy the $\epsilon$-ONC conditions, as the probability, that hidden variables assign to a ``almost''-orthogonal projectors pair the same values, is significantly greater than their overlap (see App.~E of~\cite{SM}).
    Although some arguments for qubit contextuality exist~\cite{cabello2003kochen}, they rely on POVMs rather than projectors~\cite{grudka2008there}. Nevertheless, as we discussed above, since experimental errors and imperfections are unavoidable, only the rejection of $\epsilon$-ONC models is meaningful in realistic situations. This suggests us to consider projectors under such near-orthogonalities when calculating the theoretical quantum bounds of non-contextuality inequalities. In this section, we find that this relaxation enables the possibility of qubit contextuality.

    To explore this further, we define the $\epsilon$-Lovász theta number associated with orthogonal representations of $G_{\epsilon}$ (with vertex weights $\bm{w}_{\epsilon}$), allowing projectors corresponding to vertices connected by $\epsilon$-edges to overlap by $\epsilon$. 
    \begin{equation}
		\begin{split}
			\theta_{\epsilon}(G_{\epsilon}, \bm{w}_{\epsilon}) := \max_{|\psi\>, \{|v\>\}} & \quad \sum_{v \in V} w_{\epsilon}(v) |\<\psi | v\>|^2 \\
			\text{subject to:} & \quad \||\psi\>\|^2 = 1, \||v\>\|^2 = 1, \forall v \in V, \\
			& \quad \<v | u\> = 0, \forall (v, u) \in E_{\epsilon}, \\
			& \quad |\<v | u\>| \leq \sqrt{\epsilon}, \forall (v, u) \in \widetilde{E}_{\epsilon}.
		\end{split}
		\label{ep_theta}
	\end{equation}
    This parameter is effectively bounded by the solution of an SDP relaxation of the discrete optimization problem for the $\epsilon$-independence number~Eq.~\eqref{ep_alpha_1} (proof provided in App.~D of~\cite{SM}).

	\begin{lemma}\label{lemma_theta}
		For any $\epsilon$-graph $G_{\epsilon}$ with vertex weights $\bm{w}_{\epsilon}$, the $\epsilon$-Lovász theta number is lower-bounded by the SDP relaxation of the $\epsilon$-independence number, i.e., $\theta_{\epsilon}(G_{\epsilon}, \bm{w}_{\epsilon}) \geq \theta'_{\epsilon}(G_{\epsilon}, \bm{w}_{\epsilon})$, where
		\begin{equation}
					\label{sdp_ep_theta}
			\begin{split}
				\theta'_{\epsilon}(G_{\epsilon}, \bm{w}_{\epsilon}) := \max_{X, Y} & \; \sqrt{\epsilon}(\sqrt{\bm{w}_{\epsilon}}^T X \sqrt{\bm{w}_{\epsilon}}) + (1 - \sqrt{\epsilon})  (\sqrt{\bm{w}_{\epsilon}}^T Y \sqrt{\bm{w}_{\epsilon}}) \\
				\text{s.t.} & \quad X_{vu} = 0, \; \; \forall (v, u) \in E_{\epsilon}, \\
				& \quad Y_{vu} = 0, \; \; \forall (v, u) \in E_{\epsilon} \cup \widetilde{E}_{\epsilon}, \\
				& \quad X_{vv} = Y_{vv}, \; \; \forall v \in V_{\epsilon}, \\
				& \quad \Tr(X) = 1, \; \; \Tr(Y) = 1, \\
				& \quad X \succeq 0, \;\; Y \succeq 0. 		
			\end{split}
		\end{equation}
	\end{lemma}


	

Interestingly, the above $\epsilon$-orthogonality relations enable non-trivial $\epsilon$-orthogonality graph structures in qubit systems. We demonstrate that for any $\epsilon \in (0,\frac{1}{2})$, qubit systems with projective measurements exhibit $\epsilon$-contextuality (details can be found in App.~E of~\cite{SM}). Specifically, for these values of $\epsilon$, one can find graphs $G_{\epsilon}$ and vertex weights $\bm{w}_{\epsilon}$ such that $\theta_{\epsilon}(G_{\epsilon}, \bm{w}_{\epsilon}) > \alpha_{\epsilon}(G_{\epsilon}, \bm{w}_{\epsilon})$.

\begin{theorem}\label{them_qubit}
		There exist non-contextuality inequalities and corresponding qubit correlations that certify quantum contextuality against $\epsilon$-ONC models for $\epsilon \in (0, \frac{1}{2})$, thereby exhibiting $\epsilon$-contextuality for a qubit.
\end{theorem}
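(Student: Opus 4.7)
The plan is to reuse the odd-cycle construction of Theorem 2 but realize it with qubit projectors that are only $\epsilon$-orthogonal around the cycle, exploiting the fact that a strict orthogonal representation of an odd cycle is impossible in dimension 2. The hope is that the quantum value of the associated non-contextuality expression will reach $n/2$ on the qubit with a judicious choice of Bloch vectors, while the $\epsilon$-ONC bound inherited from Theorem 2 is only $(n-1+\epsilon)/2$, giving a strictly positive gap for every $\epsilon < 1$ and in particular throughout $(0,1/2)$.

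First, given $\epsilon \in (0,1/2)$, I would fix an odd integer $n$ large enough that $\sin^2(\pi/(2n)) \leq \epsilon$ (any $n \geq \pi/(2\arcsin\sqrt{\epsilon})$ suffices), and place Bloch vectors $\vec{\phi}_i = (\cos\theta_i,\sin\theta_i,0)$ at the angles $\theta_i = i\pi(n-1)/n$ for $i=0,\dots,n-1$. A direct trigonometric check shows that consecutive Bloch vectors differ by angle $\pi - \pi/n$, so the corresponding qubit projectors satisfy $|\langle\phi_i|\phi_{i+1}\rangle|^2 = \sin^2(\pi/(2n)) \leq \epsilon$, placing the configuration inside the $\epsilon$-orthogonality graph $C_{n,\epsilon}$ of the ideal odd cycle $C_n$. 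Theorem 2 then supplies the classical-side upper bound $\alpha_{\epsilon}(C_{n,\epsilon},\bm{w}_{\epsilon}) = (n-1+\epsilon)/2$ on the non-contextual expression $\mathcal{I}_n = \sum_{i=0}^{n-1} p(v_i|C_{v_i})$, so I can treat this quantity as the benchmark to beat.

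Next I would evaluate the quantum side. For any qubit state with Bloch vector $\vec{\psi}$,
\begin{equation}
\sum_{i=0}^{n-1}|\langle\psi|\phi_i\rangle|^2 = \frac{n}{2} + \frac{1}{2}\vec{\psi}\cdot\Big(\sum_{i=0}^{n-1}\vec{\phi}_i\Big) = \frac{n}{2},
\end{equation}
because for odd $n$ the set $\{\theta_i \bmod 2\pi\}$ is exactly the $n$ evenly spaced directions on the equator and their Bloch sum vanishes. Hence the quantum value of $\mathcal{I}_n$ on this qubit realization equals $n/2$ for any state, and the gap $n/2-(n-1+\epsilon)/2 = (1-\epsilon)/2 > 0$ for all $\epsilon < 1$ delivers the claimed $\epsilon$-contextuality throughout $(0,1/2)$.

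The main obstacle, in my view, is not any calculation but justifying that the $\epsilon$-ONC bound from Theorem 2 is the correct benchmark here: on a qubit the pair $\{|\phi_i\rangle,|\phi_{i+1}\rangle\}$ is not a genuine projective context (the two projectors are only approximately orthogonal), so one has to argue carefully that the probability $p(v_i|C_{v_i})$ entering $\mathcal{I}_n$ is the single-outcome statistic $|\langle\psi|\phi_i\rangle|^2$ obtained from the legitimate context $\{|\phi_i\rangle\langle\phi_i|,|\phi_i^\perp\rangle\langle\phi_i^\perp|\}$, and that the $\epsilon$-ONC relaxation was defined precisely so that its bound applies to such approximately-orthogonal realizations. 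A useful sanity check along the way is that the qubit value $n/2$ strictly exceeds the Lov\'asz theta $\theta(C_n) < n/2$ of the odd cycle; this is consistent because the qubit vectors do \emph{not} form a strict orthonormal representation of $C_n$, so $\theta(C_n)$ does not upper-bound them, whereas the $\epsilon$-ONC bound does.
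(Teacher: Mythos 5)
Your quantum-side computation is fine: $n$ evenly spaced equatorial Bloch vectors give $\sum_i |\langle\psi|\phi_i\rangle|^2 = n/2$ for every qubit state. But the obstacle you flag at the end is not a matter of careful phrasing --- it is a genuine gap that breaks the argument. The bound $\alpha_{\epsilon}(C_{n,\epsilon},\bm{w}_{\epsilon}) = (n-1+\epsilon)/2$ from Theorem 2 is derived for the $\epsilon$-graph built from $C_n$ in which the $n$ cycle edges are \emph{genuine} orthogonality relations, each constituting a genuine two-outcome context; the term $\alpha(C_n',\bm{w}_{\epsilon}) = n/2$ comes precisely from the fact that even a context-dependent hidden variable must assign outcome $1$ to at most one vertex per genuine context. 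In your qubit realization the pairs $\{|\phi_i\rangle,|\phi_{i+1}\rangle\}$ are only $\epsilon$-orthogonal, so they are not contexts at all: each $|\phi_i\rangle$ can only be measured in the legitimate context $\{|\phi_i\rangle\langle\phi_i|,|\phi_i^{\perp}\rangle\langle\phi_i^{\perp}|\}$, every cycle edge is demoted to an $\epsilon$-edge, and the graph $G'$ obtained by deleting $\epsilon$-edges has no edges among the $\phi_i$ at all. The correct benchmark for your scenario is therefore $\epsilon\cdot n + (1-\epsilon)\cdot\frac{n-1}{2}$ (each $\phi_i$ now appears in a single context and carries weight $1$), which exceeds your quantum value $n/2$ as soon as $\epsilon \geq \frac{1}{n+1}$. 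Combined with the requirement $\sin^2(\pi/(2n)) \leq \epsilon$ needed for the vectors to be $\epsilon$-orthogonal in the first place, your construction can certify at most a small initial segment of $(0,\frac{1}{2})$, not the full claimed range. Your own sanity check --- that $n/2$ exceeds $\theta(C_n)$ --- is the symptom of the problem: you have discarded the quantum-side structure (a genuine orthogonal representation of $C_n$) that Theorem 2's classical bound is paired with, while trying to keep that classical bound.

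The paper's proof avoids this entirely by using $2n$ vectors forming $n$ \emph{genuine} orthonormal bases $\{|v_k^0\rangle,|v_k^1\rangle\}$ of the qubit, with $\epsilon$-orthogonality arising only between vectors of adjacent bases. Every context is then a legitimate projective measurement, the quantum value is exactly $n$ by completeness of the bases, and the general bound applies with $\alpha(G') = n$ and $\alpha(G'') = n-1$, giving $\alpha_{\epsilon} = n-1+\epsilon < n$ for all $\epsilon < 1$. If you want to retain a cycle-flavoured qubit construction, you must build the genuine two-outcome contexts into it in this way rather than replacing the cycle's orthogonality relations by approximate ones.
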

	
	\textit{Conclusions and Open Questions.-- } In this Letter, we have derived a family of contextuality tests to certify the maximum possible amount of $\log_2 d$ bits of randomness from $d$-dimensional quantum systems. The tests are experimentally feasible and incorporate practical constraints including measurement settings being chosen with weak random seeds and a relaxation of the assumption that the measurements exactly conform to a prescribed compatibility structure. We have also seen that in this practical situation, even a single qubit is $\textit{almost}$ contextual, thereby introducing the possibility of utilising single qubit systems in contextuality applications. A few open questions still remain. It would be interesting to implement the prescribed tests in an actual experimental protocol and identify the advantage in rates and detection efficiency over other implementations such as those based on self-testing \cite{singh2024local}. It is an open question to establish, in the tradition of KS contextuality investigations, the orthogonality graphs with fewest vertices that certify $\log_2 d$ bits of randomness for arbitrary $d \geq 3$, and identify ones that give the best tolerance to noise, weak seeds and relaxations of the classical non-contextual model. This corresponds to the optimal contextuality tests with the fewest measurement settings. For this purpose, it would be useful to identify graphs such that their Lovász theta equals the fractional packing number, for which constructive methods have been developed based on local complementation orbits \cite{cabello2013exclusivity}.  

Additional references~\cite{grotschel2012geometric,milgrom2002envelope,lovasz1979shannon,renner2004smooth,navascues2007bounding,navascues2008convergent,yu2012state} are cited in the Supplemental Material~\cite{SM}.
	
\textit{Acknowledgments.-- } We thank the anonymous reviewers for their suggestions for improving the readability and clarity of the manuscript. This work is supported by the General Research Fund (GRF) Grants No. 17211122 and No. 17307925, and the Research Impact Fund (RIF) Grant No. R7035-21.

\end{document}